\newtheorem{thm}{Theorem}
\newtheorem{cor}{Corollary}
\newtheorem{obs}{Observation}
\newcommand{\Qset}{\mathbb{Q}}
\newenvironment{keyword}{\par{\noindent\bf Keywords:}}
\begin{document}

\title{Approximability of the robust representatives selection problem}

\author{Adam Kasperski\\
   {\small \textit{Institute of Industrial}}\\
  {\small \textit{Engineering and Management,}}\\
  {\small \textit{Wroc{\l}aw University of Technology,}}\\
  {\small \textit{Wybrze{\.z}e Wyspia{\'n}skiego 27,}}\\
  {\small \textit{50-370 Wroc{\l}aw, Poland,}}\\
  {\small \textit{adam.kasperski@pwr.edu.pl}}
  \and  Adam Kurpisz, Pawe{\l} Zieli{\'n}ski\footnote{Corresponding author} \\
    {\small \textit{Institute of Mathematics}}\\
  {\small \textit{and Computer Science}}\\
  {\small \textit{Wroc{\l}aw University of Technology,}}\\
  {\small \textit{Wybrze{\.z}e Wyspia{\'n}skiego 27,}}\\
  {\small \textit{50-370 Wroc{\l}aw, Poland}}\\
{\small \textit{\{adam.kurpisz,pawel.zielinski\}@pwr.edu.pl}}
}

\date{}
\maketitle

\begin{abstract}
In this paper new complexity and approximation results on the robust versions of the representatives selection problem, 
under the scenario uncertainty representation,
are provided, which extend  the results obtained in the recent papers by Dolgui and Kovalev (2012), and Deineko and Woeginger (2013). Namely,
 it is shown that if the number of scenarios is a part of input, then 
  the min-max (regret) representatives selection problem
 is not approximable within a ratio of $O(\log^{1-\epsilon}K)$ for any $\epsilon>0$, where $K$ is the number of scenarios, unless the problems in NP have quasi-polynomial time algorithms. 
 An approximation algorithm with an approximation  ratio of $O(\log K/ \log \log K)$ for
 the min-max  version of the problem
  is also provided.
\end{abstract}

\begin{keyword}
	robust optimization, selection problem, uncertainty, computational complexity
\end{keyword}

\section{Preliminaries}

In~\cite{DW13,DK12} the min-max and min-max regret versions of the following \emph{representatives selection} problem (\textsc{RS} for short) have been recently discussed. We are given a set $T$ of $n$ tools, numbered from $1$ to $n$. This set is partitioned into $p$ disjoint sets $T_1,\ldots,T_p$, where $|T_i|=r_i$ and $n=\sum_{i\in [p]}r_i$ 
(we use  $[p]$ to denote the set $\{1,\ldots p\}$).  
  We wish to choose a subset $X\subseteq T$ of the tools  that contains exactly one tool from each set $T_i$, i.e. 
 $|X\cap T_i|=1$ for each $i\in[p]$. In the deterministic case, each tool $j\in T$ has a nonnegative cost $c_j$ and we seek a solution~$X$ whose total cost $F(X)=\sum_{j\in X} c_j$ is minimal. This problem can be solved by a trivial algorithm, which chooses a tool of the smallest cost from each $T_i$. 
An important problem characteristic is the maximal number of elements in $T_i$, i.e. $r_{\max}=\max_{i\in [p]} r_i$.
We can assume that $r_i\geq 2$ for all $i\in[p]$, since a subset with the cardinality of~1 can be ignored. 
 It is easy to see that  \textsc {RS} is a special case of the \textsc{Shortest Path} problem, and this fact is depicted in Figure~\ref{fig1}. Each solid arc in the graph $G$ shown in Figure~\ref{fig1} corresponds to a tool in~$T_i$ and each dummy (dashed) arc has zero cost. There is one-to-one correspondence between the $s$-$t$ paths in~$G$ and the solutions to \textsc{RS}.   It is not difficult to transform the network $G$ and show that \textsc{RS} is also a special case of other basic network problems such as: \textsc{Minimum Spanning Tree}, \textsc{Minimum Assignment}, or \textsc{Minimum Cut} (see, e.g.~\cite{KZ09}).
\begin{figure}
\begin{center}
\includegraphics{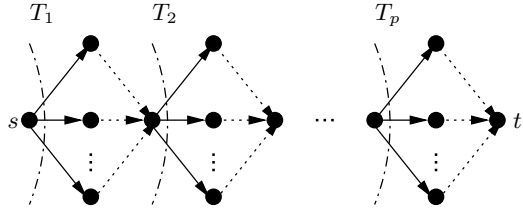}
\caption{The representative selection problem as the $s$-$t$ shortest path problem.}
\label{fig1}
\end{center}
\end{figure}

 Following~\cite{DK12}, let us now define the robust versions of \textsc{RS}.  Let $\Gamma=\{S_1,\dots,S_K\}$ be a \emph{scenario set}, where each \emph{scenario} is  a vector 
 $S_k=(c_{k1},\ldots,c_{kn})$ of nonnegative 
integral tool costs. Let $F(X,S_k)=\sum_{j\in X} c_{kj}$ be the cost of solution $X\in \Phi$
 under scenario $S_k$ and let $F^*(S_k)$ be the cost of an optimal solution under $S_k$. 
 In order to choose a solution,
two robust criteria,
called the \emph{min-max} and  the \emph{min-max regret}, can be
adopted (see~\cite{KY97} for a motivation of both robust criteria). 
In the \textsc{Min-Max RS} problem, we seek a solution which minimizes the largest cost over all scenarios, that is
\begin{equation}
OPT_1=\min_{X\in \Phi} cost_1(X)=\min_{X\in \Phi}\max_{k\in [K]} F(X,S_k). \label{opt1}
\end{equation}
In the \textsc{Min-Max Regret RS} problem, we wish to find a solution which minimizes the \emph{maximal regret}, that is
\[
 OPT_2=\min_{X\in \Phi} cost_2(X)=
\min_{X\in \Phi}\max_{k\in [K]} (F(X,S_k)-F^*(S_k)).
\]

We now describe the known results on  \textsc{Min-Max (Regret) RS}. Notice first that, all  positive results  for \textsc{Min-Max (Regret) Shortest Path} (see, e.g.~\cite{ABV09}) remain valid for \textsc{Min-Max (Regret) RS},
which is due to 
the fact that 
there is one-to-one correspondence between  $s$-$t$ paths in
 a layered digraph
and solutions to \textsc{RS} (see Figure~\ref{fig1}).
 Hence the latter problem admits a simple $K$-approximation algorithm which outputs an optimal solution for the aggregated tool costs $\hat{c}_j=\max_{k\in [K]} c_{kj}$, $j\in [n]$. 
 Furthermore, if $K$ is constant, then \textsc{Min-Max (Regret) RS} can be solved in pseudopolynomial time
  and admits an FPTAS. It has been shown in~\cite{DK12} that  \textsc{Min-Max (Regret) RS} is NP-hard even 
  when $K=2$ and $r_i=2$ for all $i\in [p]$, and becomes strongly NP-hard when the number of scenarios
   is a part of the input. This result has been recently extended in~\cite{DW13}, where it has been shown that when 
    $r_i=2$ for all $i\in [p]$, then  \textsc{Min-Max (Regret) RS} is as hard to approximate as the vertex cover problem, which is conjectured to be hard to approximate within $2-\epsilon$ for any $\epsilon>0$.  A similar inapproximability result, by assuming P$\neq$NP, follows immediately from the reduction given in~\cite{KZ09} for the \textsc{Min-Max (Regret) Shortest Path} problem. However, in the instances from~\cite{KZ09} we have 
     $r_i=3$ for all $i\in [p]$, so the result obtained in~\cite{DW13} is stronger.

\paragraph{Our results}
In this paper we investigate the case when both $K$ and $r_{\max}$ are 
parts of the input. We show that 
\textsc{Min-Max RS}
 admits an $r_{\max}$-approximation algorithm whose idea is to round 
up  solutions computed by solving a linear relaxation. In particular, it admits a 2-approximation algorithm when
 $r_{\max}=2$ which is the best possible according to the negative results given in~\cite{DW13}. 
 We also show, and these are the main results of the paper, that when additionally  $r_{\max}$ is a part of the input,  
 then \textsc{Min-Max (Regret) RS}
  is not approximable within $O(\log^{1-\epsilon} K)$ for any $\epsilon>0$ unless
   NP$\subseteq$DTIME$(n^{{\rm poly}(\log  n)})$ and provide
  an approximation algorithm with  a performance ratio of $O(\log K/\log\log K)$ for 
  \textsc{Min-Max RS},
  which is  close to the above lower bound.

\section{Hardness result}

In order to establish the hardness result, we will use the following variant of the \textsc{Label Cover} problem
(see e.g., \cite{AC95, MNO13}):

\begin{description}
\item[\mdseries \scshape Label Cover:]

We are given a regular  bipartite graph $G=(V\cup W,E)$, $E\subseteq V\times W$,
a set of labels $[N]$, and
for each edge $(v,w)\in E$ a  map (partial) $\sigma_{v,w}:[N]\rightarrow [N]$.
A \emph{labeling} of $G$ is an assignment of a subset of labels to each of the vertices of $G$, i.e. a function  $l: V\cup W \rightarrow 2^{[N]}$. We say that a labeling \emph{satisfies}
an edge $(v,w)\in E$ if there exist $a\in l(v)$ and $b\in l(w)$ such that $\sigma_{v,w}(a)=b.$
A \emph{total labeling} is a labeling that satisfies all edges.
We seek a  total labeling whose  value defined as  $\max_{x\in V\cup W}|l(x)|$ is minimal.
This minimal value is denoted by 
$val(\mathcal{L})$, where $\mathcal{L}$ is the input instance.

\end{description}

\begin{thm}[\cite{MNO13}]
There exists a constant~$\gamma>0$ such that
for any language $L\in NP$, any input $\mathbf{w}$ and any $N>0$,
one can construct a \textsc{Label Cover} instance~$\mathcal{L}$  with the following properties in time polynomial in the instance's size:
\begin{itemize}
\item the number of vertices in $\mathcal{L}$ is $|\mathbf{w}|^{O(\log N)}$,
\item if $\mathbf{w}\in L$, then $val(\mathcal{L})=1$,
\item if $\mathbf{w}\not\in L$, then $val(\mathcal{L})> N^{\gamma}$.
\end{itemize}
\label{tlancover}
\end{thm}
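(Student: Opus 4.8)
The plan is to derive this statement from the PCP theorem together with the parallel repetition theorem, which is the standard route to quasi-polynomial \textsc{Label Cover} hardness. First I would start from a two-prover one-round game representation of an arbitrary $L\in NP$: by the PCP theorem, for input $\mathbf{w}$ one constructs in polynomial time a regular bipartite \textsc{Label Cover} instance $\mathcal{L}_0=(V_0\cup W_0,E_0)$ over a label set $[N_0]$ of constant size, with projection (partial-map) constraints $\sigma_{v,w}$, such that if $\mathbf{w}\in L$ then some single-label assignment satisfies every edge (perfect completeness), while if $\mathbf{w}\notin L$ then every single-label assignment satisfies at most a $(1-\delta)$ fraction of the edges, for some absolute constant $\delta>0$. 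Regularity and bipartiteness can be enforced by routine preprocessing.

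Next I would amplify the gap by taking the $\ell$-fold parallel repetition $\mathcal{L}_0^{\otimes\ell}$: its vertices are $\ell$-tuples of the original vertices, its label set is $[N_0]^\ell$ of size $N:=N_0^\ell$, and an edge is the coordinatewise product of $\ell$ edges of $\mathcal{L}_0$. Perfect completeness is preserved, since a satisfying single-label assignment for $\mathcal{L}_0$ lifts coordinatewise. For soundness I would invoke Raz's parallel repetition theorem (or Holenstein's sharpening): the maximum single-label value drops to $(1-\delta)^{\Omega(\ell)}$ when $\mathbf{w}\notin L$. Choosing $\ell=\Theta(\log N)$ with the hidden constant tuned to hit the desired target $N$, this makes $N_0^\ell=N$ while the single-label value is at most $\epsilon:=2^{-\Omega(\ell)}=N^{-\Omega(1)}$.

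The final step is to pass from the single-label (game) value to the covering value $val(\mathcal{L})=\max_{x}|l(x)|$ that appears in the statement. In the completeness case, assigning each vertex its single satisfying label gives a total labeling with $val=1$. In the soundness case I would argue by contradiction: if a total labeling with $\max_{x}|l(x)|\le s$ existed, then picking for each vertex a uniformly random label from its list satisfies each edge with probability at least $1/s^2$, so the single-label value would be at least $1/s^2$; combined with the bound $\epsilon=N^{-\Omega(1)}$ this forces $s> N^{\gamma}$ for a suitable constant $\gamma>0$. Finally I would verify the size bound: $\ell$-fold repetition produces $|E_0|^\ell$ edges and $|V_0\cup W_0|^\ell$ vertices, so with $|\mathcal{L}_0|=\mathrm{poly}(|\mathbf{w}|)$ and $\ell=\Theta(\log N)$ the instance has $|\mathbf{w}|^{O(\log N)}$ vertices and is constructed in time polynomial in that size.

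The main obstacle is the soundness analysis of parallel repetition: the naive guess is that the value multiplies as $(1-\delta)^\ell$, but this is false, and the correct decay rate $(1-\delta)^{\Omega(\ell)}$ is precisely the content of Raz's theorem, a deep result that I would cite rather than reprove. A secondary technical point is keeping the constants consistent so that a \emph{single} $\gamma$ works for every target $N$, and checking that the projection structure and regularity survive the repetition and the final game-to-covering translation; these steps are routine but must be carried out carefully.
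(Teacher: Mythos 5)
The paper does not actually prove Theorem~\ref{tlancover}: it is imported verbatim from \cite{MNO13} (with \cite{AC95} as background), so there is no in-paper argument to compare against. Your reconstruction --- a PCP-based two-prover game with perfect completeness and constant soundness, amplified by Raz's parallel repetition with $\ell=\Theta(\log N)$ rounds, followed by the standard averaging step (a total labeling with lists of size at most $s$ yields a single-label strategy of value at least $1/s^2$, forcing $s>N^{\gamma}$ in the soundness case) --- is correct, and it is exactly the canonical derivation by which this theorem is established in the cited literature, including the size bound $|\mathbf{w}|^{O(\log N)}$ and the observation that Raz's theorem, rather than naive value multiplication, is the essential ingredient.
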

We now prove the following result:
\begin{thm}
There exists a constant~$\gamma>0$ such that
for any language $L\in NP$, any input $\mathbf{w}$, and any $N>0$,
one can construct an instance of
\textsc{Min-Max RS} with the following properties:
\begin{itemize}
\item if $\mathbf{w}\in L$, then $OPT_1\leq 1$,
\item if $\mathbf{w}\not\in L$, then $OPT_1\geq \lfloor N^\gamma \rfloor:=g$,
\item the number of tools is at most $|\mathbf{w}|^{O(\log N)}N$ and the number of scenarios is at most $|\mathbf{w}|^{O(g\log N)}N^{g}$.
\end{itemize}
\label{taprminmax}
\end{thm}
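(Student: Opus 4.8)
The plan is to reduce \textsc{Label Cover} to \textsc{Min-Max RS} so that a solution of the selection problem encodes, for each edge of $G$, one \emph{satisfying} label pair, and so that the min--max objective reads off the largest number of distinct labels forced onto a single vertex. Concretely, I would take the groups of the \textsc{RS} instance to be the edges of $G$: for every edge $e=(v,w)\in E$ and every label $a$ on which $\sigma_{v,w}$ is defined, introduce one tool $(e,a)$, interpreted as the assignment ``$v$ gets $a$ and $w$ gets $\sigma_{v,w}(a)$''. Choosing one tool per group then picks a satisfying pair for each edge and induces a labeling $l_X$ of $G$, where $l_X(x)$ is the set of distinct labels that the chosen pairs assign to $x$ over the edges incident to $x$. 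By construction every edge is satisfied, so $l_X$ is always a \emph{total} labeling, and hence $val(\mathcal{L})\le \max_{x}|l_X(x)|$. (If some $\sigma_{v,w}$ has empty domain the edge is unsatisfiable and $val(\mathcal{L})=\infty$; this only helps the ``no'' side and can be handled by a dummy tool, so I assume each group is nonempty.)

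Next I would design the scenarios to measure $\max_x|l_X(x)|$ with purely additive costs. For every vertex $x$, every set of $g$ distinct edges $e_1,\dots,e_g$ incident to $x$, and every tuple of $g$ \emph{distinct} labels $b_1,\dots,b_g\in[N]$, I introduce a scenario that assigns cost $1$ to each tool of $e_i$ that puts label $b_i$ on $x$ (namely $(e_i,b_i)$ if $x$ is the tail, or all $(e_i,a)$ with $\sigma_{e_i}(a)=b_i$ if $x$ is the head), and cost $0$ to all other tools. Since a solution picks exactly one tool per edge, under this scenario $F(X,S_k)$ equals the number of indices $i$ for which the chosen pair of $e_i$ actually assigns $b_i$ to $x$; hence $F(X,S_k)\le g$, with equality exactly when $e_1,\dots,e_g$ realize $g$ distinct labels at $x$.

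The two implications then follow. If $\mathbf{w}\in L$ then, by Theorem~\ref{tlancover}, $val(\mathcal{L})=1$, so there is a total labeling with one label per vertex; taking the solution that, on each edge $e=(v,w)$, chooses the pair consistent with this labeling makes every incident edge of a vertex assign it the same single label, so in any scenario at most one of the distinct targets $b_1,\dots,b_g$ can be hit and $\max_k F(X,S_k)\le 1$, giving $OPT_1\le 1$. If $\mathbf{w}\notin L$ then $val(\mathcal{L})>N^\gamma$, and since $l_X$ is total for every solution $X$, some vertex $x$ satisfies $|l_X(x)|>N^\gamma$, i.e.\ it carries at least $g+1\ge g$ distinct labels; choosing $g$ of them together with one witnessing incident edge each yields a scenario of cost $g$, so $\max_k F(X,S_k)\ge g$ for every $X$ and thus $OPT_1\ge g$.

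Finally I would verify the size bounds, writing $M=|V\cup W|=|\mathbf{w}|^{O(\log N)}$ and using $|E|\le M^2$. The number of tools is at most $|E|\,N\le M^2N=|\mathbf{w}|^{O(\log N)}N$, and the number of scenarios is at most $M\cdot|E|^g\cdot N^g\le M^{2g+1}N^g=|\mathbf{w}|^{O(g\log N)}N^g$, as required. The one genuinely nontrivial step is the second one: a single additive scenario cannot count \emph{distinct} labels, so detecting the threshold ``$\ge g$ distinct labels at a vertex'' forces the brute-force enumeration over all $g$-subsets of incident edges and all $g$-tuples of distinct labels. This enumeration is exactly what produces the $N^g$ and $M^{O(g)}$ factors, and checking that it still fits inside the claimed scenario bound is the crux of the argument.
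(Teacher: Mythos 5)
Your proof is correct and takes essentially the same approach as the paper: tools are (edge, label) pairs encoding satisfying assignments, and unit-cost scenarios are built around each vertex so that the min--max value detects whether some feasible selection forces $g$ distinct labels onto a single vertex, giving the gap $1$ versus $g=\lfloor N^\gamma\rfloor$. The only cosmetic difference is that you enumerate $g$-tuples of distinct labels (charging every tool consistent with each label), whereas the paper enumerates $g$-tuples of pairwise ``label distinct'' tools; both variants yield the same argument and the same $|\mathbf{w}|^{O(\log N)}N$ tool and $|\mathbf{w}|^{O(g\log N)}N^{g}$ scenario bounds.
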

\begin{proof}
Let $L$ be a language in NP and let $\mathcal{L}=(G=(V\cup W,E),N,\sigma)$ be the instance of \textsc{Label Cover} constructed for $L$ (see Theorem~\ref{tlancover}). For each edge $(v,w)\in E$  we create a subset of tools $T_{v,w}$, which contains at most $N$ tools labeled as $(i,\sigma_{v,w}(i))$, $i\in [N]$.
A sample reduction is shown in Figure~\ref{fig2}.
\begin{figure}
\begin{center}
\includegraphics{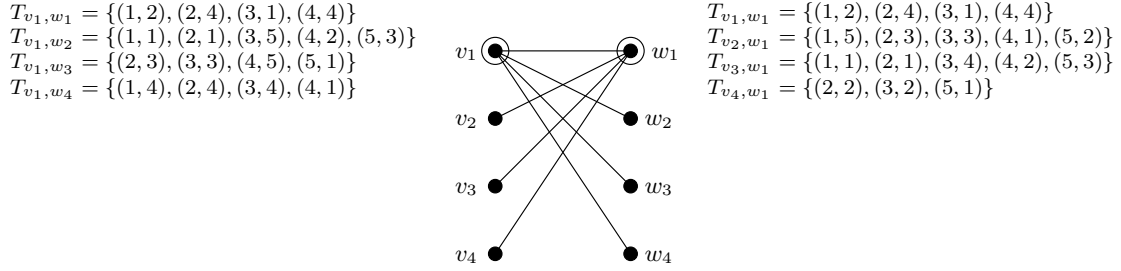}
\caption{A sample reduction for $K_{4,4}$ graph and $N=5$. Only the edges incident to $v_1$ and $w_1$ and the subsets of tools corresponding to them are shown.}
\label{fig2}
\end{center}
\end{figure}

A pair of tools $(i_1,j_1)\in T_{v_1,w_1}$ and $(i_2,j_2)\in T_{v_2,w_2}$ is \emph{label distinct} if $i_1=i_2$ implies $v_1\neq v_2$ and $j_1=j_2$ implies $w_1\neq w_2$. In other words, this pair cannot assign the same label twice to any vertex. For example, the pair $(3,1)\in T_{v_1,w_1}$ and $(1,1)\in T_{v_1,w_2}$ is label distinct, whereas  the pair $(1,2)\in T_{v_1,w_1}$ and $(1,1)\in T_{v_1,w_2}$ is not, as it assigns label 1 twice to $v_1$. We are now ready to form scenario set $\Gamma$. Let us fix a vertex $v\in V$. For each $g$-tuple $(v,w^1),\dots,(v,w^g)$ of pairwise distinct edges incident to $v$ and for each $g$-tuple of pairwise label distinct tools from $T_{v,w^1}\times T_{v,w^2}\times\ldots\times T_{v,w^g}$, we form a scenario under which these tools have costs equal to 1 and all the remaining tools have costs equal to 0. We repeat this procedure for each vertex $v\in V$. A sample scenario for vertex $v_1$ in Figure~\ref{fig2} and $g=3$ assigns the cost equal to 1 to tools $(1,2)\in T_{v_1,w_1}$, $(3,5)\in T_{v_1,w_2}$ and $(4,5)\in T_{v_1,w_3}$ as these tools forms a 3-tuple of pairwise label distinct tools. Let us fix a vertex $w\in W$. Now, for each $g$-tuple $(v^1,w),\dots,(v^g,w)$ of pairwise distinct edges incident to $w$ and for each $g$-tuple of pairwise label distinct tools from $T_{v^1,w}\times T_{v^2,w}\times\ldots\times T_{v^g,w}$, we form a scenario under which these tools have costs equal to 1 and all the remaining tools have costs equal to 0. We repeat this procedure for each vertex $w\in V$. A sample scenario for vertex $w_1$ in Figure~\ref{fig2} and $g=3$ assigns the cost equal to 1 to tools $(1,2)\in T_{v_1,w_1}$, $(1,5)\in T_{v_2,w_1}$ and $(3,4)\in T_{v_3,w_1}$ as these tools forms a 3-tuple of pairwise label distinct tools. To ensure that $\Gamma\neq \emptyset$, we create one additional scenario under which each tool has a cost equal to~0.

Suppose that $\mathbf{w}\in L$. Then $val(\mathcal{L})=1$ and there exists a total labeling $l$ which assigns one label $l(x)$ to each vertex $x\in V\cup W$. Let us choose tool $(l(v),l(w))\in T_{v,w}$ for each $(u,v)\in E$. We thus get a feasible selection of the tools $X$. It is easy to see that $F(X,S)\leq 1$ under each scenario $S\in \Gamma$ and, consequently $OPT_1\leq 1$. Assume that  $\mathbf{w}\notin L$. Then $val(\mathcal{L})>N^{\gamma}$ which implies $val(\mathcal{L})> \lfloor N^{\gamma}\rfloor=g$. Assume, by contradiction, that $OPT_1<g$, so there is a feasible selection $X$ such that $F(X,S)<g$ under each scenario $S\in \Gamma$. Observe that $X$ corresponds to the total labeling $l$ which assigns labels $i$ to $v_i$ and $j$ to $w_j$ when the tool $(i,j)$ is selected from $T_{v_i,w_j}$. From the construction of $\Gamma$ it follows that $l$ assigns less than $g$ distinct labels to each vertex $x\in V\cup W$, since otherwise $F(X,S)=g$ for some scenario $S\in\Gamma$. We thus have $val(\mathcal{L})< g$, a contradiction.

Let us now estimate the size of the constructed instance of \textsc{Min-Max RS} with respect to $|\mathbf{w}|$. The number of tools is  at most $|E| N$ and the number of scenarios is bounded by $|V||W|^gN^g+|W||V|^gN^g+1$.  According to Theorem~\ref{tlancover}, the number of vertices (and also edges) in $G$ is $|\mathbf{w}|^{O(\log N)}$. Hence the number of tools is at most $|\mathbf{w}|^{O(\log N)}N$ and the number of scenarios is at most $|\mathbf{w}|^{O(g\log N)}N^{g}$.
\end{proof}

\begin{thm} 
\textsc{Min-Max RS} is not approximable within $O(\log^{1-\epsilon} K)$, for any $\epsilon>0$, unless 
NP$\subseteq$DTIME$(n^{{\rm poly}(\log  n)})$.
\label{tngbmm}
\end{thm}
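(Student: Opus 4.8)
The plan is to derive Theorem~\ref{tngbmm} directly from the parametrized reduction of Theorem~\ref{taprminmax} by choosing the free parameter $N$ so that a hypothetical $O(\log^{1-\epsilon}K)$-approximation algorithm would let us decide membership in an arbitrary NP language in quasi-polynomial time. The gap produced by Theorem~\ref{taprminmax} is between $OPT_1\le 1$ (when $\mathbf{w}\in L$) and $OPT_1\ge g=\lfloor N^\gamma\rfloor$ (when $\mathbf{w}\notin L$), so the multiplicative gap is essentially $g$. The key observation is that the number of scenarios is bounded by $K\le |\mathbf{w}|^{O(g\log N)}N^{g}$, so $\log K = O(g\log N \cdot \log|\mathbf{w}| + g\log N)$, which we want to compare against the gap $g\approx N^\gamma$.

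First I would fix the target language $L\in NP$ and an instance $\mathbf{w}$, and choose $N$ as a suitable function of $|\mathbf{w}|$ — a natural choice is $N = (\log|\mathbf{w}|)^{c}$ for a large constant $c$ depending on $\epsilon$ and $\gamma$. With this choice, $g=\lfloor N^\gamma\rfloor = \Theta((\log|\mathbf{w}|)^{c\gamma})$ is polylogarithmic in $|\mathbf{w}|$, so the scenario bound $K\le |\mathbf{w}|^{O(g\log N)}N^g$ gives $\log K = O(g\log N\cdot\log|\mathbf{w}|)$, which is polylogarithmic in $|\mathbf{w}|$ times a $\log|\mathbf{w}|$ factor, hence $\log K = (\log|\mathbf{w}|)^{O(1)}$ and the instance size is quasi-polynomial, $|\mathbf{w}|^{O(\log N)} = |\mathbf{w}|^{{\rm poly}(\log|\mathbf{w}|)}$, as required for the DTIME assumption.

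Next I would compare the gap $g$ to $\log^{1-\epsilon}K$. Since $\log K = O(g\log N\log|\mathbf{w}|)$ and $g=\Theta(N^\gamma)$ with $\log N = O(\log\log|\mathbf{w}|)$, one gets $\log K = O(N^\gamma (\log\log|\mathbf{w}|)(\log|\mathbf{w}|))$, so $\log^{1-\epsilon}K = O\big(N^{\gamma(1-\epsilon)}(\log|\mathbf{w}|)^{1-\epsilon+o(1)}\big)$. The crucial inequality to verify is that $g=\Theta(N^{\gamma})$ strictly dominates $\log^{1-\epsilon}K$, i.e.\ that $N^{\gamma}$ grows faster than $N^{\gamma(1-\epsilon)}(\log|\mathbf{w}|)^{1-\epsilon}$; since $N=(\log|\mathbf{w}|)^c$, the extra factor $N^{\gamma\epsilon}=(\log|\mathbf{w}|)^{c\gamma\epsilon}$ beats $(\log|\mathbf{w}|)^{1-\epsilon}$ precisely when $c$ is chosen large enough, namely $c\gamma\epsilon>1-\epsilon$. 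Hence for a suitably large constant $c=c(\epsilon,\gamma)$ we have $g > \log^{1-\epsilon}K$ for all large $|\mathbf{w}|$.

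Finally I would close the argument by contradiction. If \textsc{Min-Max RS} admitted a polynomial-time $O(\log^{1-\epsilon}K)$-approximation algorithm, then running it on the constructed instance would, in the yes-case ($\mathbf{w}\in L$, $OPT_1\le 1$), return a solution of value at most $O(\log^{1-\epsilon}K)$, while in the no-case ($\mathbf{w}\notin L$, $OPT_1\ge g$) every solution has value at least $g > O(\log^{1-\epsilon}K)$; comparing the returned value against the threshold thus decides $\mathbf{w}\in L$. Since the whole reduction plus the approximation call runs in time polynomial in the instance size $|\mathbf{w}|^{{\rm poly}(\log|\mathbf{w}|)}$, this would place every $L\in NP$ in DTIME$(n^{{\rm poly}(\log n)})$, contradicting the assumed hardness. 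The main obstacle is the calibration in the previous paragraph: one must verify that the polylogarithmic choice of $N$ simultaneously keeps the instance quasi-polynomial (controlling $|\mathbf{w}|^{O(\log N)}$ and $\log K$) and forces the gap $g$ to strictly exceed the approximation factor $\log^{1-\epsilon}K$, and these two requirements pull the constant $c$ in opposite directions, so the feasibility of the trade-off for every fixed $\epsilon>0$ is exactly what needs care.
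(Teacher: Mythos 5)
Your proposal is correct and follows essentially the same route as the paper: both instantiate the reduction of Theorem~\ref{taprminmax} with $N$ polylogarithmic in $|\mathbf{w}|$ (your $N=(\log|\mathbf{w}|)^{c}$ with $c\gamma\epsilon>1-\epsilon$ is the paper's $N=\lceil\log^{\beta/\gamma}|\mathbf{w}|\rceil$ with $\beta$ large, $\epsilon=2/(\beta+2)$), bound $\log K$ by a polylogarithm of $|\mathbf{w}|$, check that the gap $g=\lfloor N^{\gamma}\rfloor$ dominates $\log^{1-\epsilon}K$, and conclude that an $O(\log^{1-\epsilon}K)$-approximation would decide any NP language in quasi-polynomial time. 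The only cosmetic difference is that you fix $\epsilon$ first and solve for the exponent, while the paper fixes the exponent $\beta$ and reads off the achievable $\epsilon$; also, the tension you worry about at the end is not real, since quasi-polynomiality of the construction holds for every constant exponent $c$, so only the gap condition constrains $c$.
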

\begin{proof}
Let $\gamma$ be the constant from Theorem~\ref{taprminmax}. Consider a language $L\in$NP and an input~$\mathbf{w}$. Fix any constant $\beta>0$ and choose  $N=\lceil \log^{\beta/\gamma} |\mathbf{w}|\rceil$. According to Theorem~\ref{taprminmax}, we can  construct an instance of \textsc{Min-Max RS} in which the number of scenarios  $K$ is asymptotically bounded by $|\mathbf{w}|^{\alpha N^\gamma \log N}N^{N^\gamma}$ for some constant $\alpha>0$, $OPT_1\leq 1$ if $\mathbf{w}\in L$ and $OPT_1\geq\lfloor \log^\beta |\mathbf{w}| \rfloor$ if $\mathbf{w}\notin L$.
 We obtain $\log K \leq \alpha N^\gamma \log N \log|\mathbf{w}|+N^\gamma \log N\leq \alpha' \log^{\beta+2}|\mathbf{w}|$ for some constant $\alpha'>0$ and sufficiently large $|\mathbf{w}|$. Therefore, $\log|\mathbf{w}|\geq (1/\alpha') \log^{1/(\beta+2)}K$ and the gap is at least $\lfloor \log^\beta |\mathbf{w}| \rfloor\geq \lfloor 1/\alpha' \log^{\beta/(\beta+2)} K \rfloor$.
Since  the constant $\beta>0$ can be arbitrarily large, we get that the gap is $O(\log^{1-\epsilon} K)$ for any $\epsilon=2/(\beta+2)>0$.
Furthermore, the instance of \textsc{Min-Max RS} can be computed in $O(|\mathbf{w}|^{{\rm poly (log }|\mathbf{w}|)})$ time, which completes the proof.
\end{proof}

\begin{cor} 
\textsc{Min-Max Regret RS} is not approximable within $O(\log^{1-\epsilon} K)$, for any $\epsilon>0$, unless
NP$\subseteq$DTIME$(n^{{\rm poly}(\log  n)})$.
\end{cor}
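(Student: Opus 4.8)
The plan is to argue that no new reduction is needed: the very instance built in the proof of Theorem~\ref{taprminmax} and used in Theorem~\ref{tngbmm} already satisfies $OPT_2=OPT_1$, so the inapproximability result transfers verbatim to the regret criterion.

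The key observation I would establish first is that $F^*(S_k)=0$ for \emph{every} scenario $S_k$ in the construction. Each scenario is built from a $g$-tuple of pairwise distinct edges incident to a fixed vertex, and assigns cost~1 to exactly one tool in each of the corresponding $g$ subsets $T_{v,w}$, leaving all other tools at cost~0. Since the edges in the tuple are pairwise distinct, at most one tool per subset receives cost~1 under any single scenario. Using the standing assumption $r_i\geq 2$, one can therefore always select a zero-cost tool from each subset, producing a feasible solution of cost~0 under $S_k$; the extra all-zero scenario is handled trivially. Hence $F^*(S_k)=0$ for all $k\in[K]$.

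Granting this, for every feasible solution $X$ the maximal regret coincides with the maximal cost, because $F(X,S_k)-F^*(S_k)=F(X,S_k)$ for each $k$. Formally,
\[
cost_2(X)=\max_{k\in[K]}\bigl(F(X,S_k)-F^*(S_k)\bigr)=\max_{k\in[K]}F(X,S_k)=cost_1(X),
\]
and minimizing over $X\in\Phi$ gives $OPT_2=OPT_1$. Consequently the gap properties of Theorem~\ref{taprminmax} (namely $OPT_1\leq 1$ when $\mathbf{w}\in L$ and $OPT_1\geq g$ otherwise) hold with $OPT_2$ in place of $OPT_1$, while the bounds on the number of tools and scenarios are unchanged.

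Replaying the calculation of Theorem~\ref{tngbmm} with $OPT_2$ then yields the same $\Omega(\log^{\beta/(\beta+2)}K)$ gap, and letting $\beta$ grow arbitrarily large gives inapproximability within $O(\log^{1-\epsilon}K)$ for every $\epsilon>0$ under NP$\subseteq$DTIME$(n^{{\rm poly}(\log n)})$. The only step requiring care is the verification that each subset genuinely offers a zero-cost alternative in every scenario; this is exactly where the assumption $r_i\geq 2$ and the pairwise-distinctness of the edges in each defining tuple are used, and it is the crux that makes the regret and min-max values collapse.
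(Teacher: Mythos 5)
There is a genuine gap, and it sits exactly at the step you yourself identify as the crux. Your claim that $F^*(S_k)=0$ for every scenario of the construction of Theorem~\ref{taprminmax} is justified by invoking ``the standing assumption $r_i\geq 2$'', but that assumption is a preprocessing convention for instances \emph{given as input} to \textsc{RS}; it is not a property you may impose on the instance \emph{produced by a reduction}. In the construction, $T_{v,w}$ contains one tool for each label on which the \emph{partial} map $\sigma_{v,w}$ is defined, so nothing in Theorem~\ref{tlancover} prevents some (or even all) of the sets $T_{v,w}$ from being singletons. If the cost-$1$ tools of a scenario lie in singleton sets, every feasible solution is forced to pay for them, so $F^*(S_k)>0$ and regret no longer coincides with cost. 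In the extreme case where every $\sigma_{v,w}$ has a domain of size one, the instance has a unique feasible selection, hence $OPT_2=0$ identically --- even when $\mathbf{w}\notin L$ and $OPT_1\geq g$ --- and the gap you want to transfer vanishes entirely. (Note also that removing singleton subsets as a preprocessing step does not help: forced label assignments still count toward $val(\mathcal{L})$ but contribute nothing to the regret, so the same collapse occurs.) Hence ``$OPT_2=OPT_1$ for the unmodified instance'' cannot be derived from what Theorems~\ref{tlancover} and~\ref{taprminmax} guarantee.

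This is precisely why the paper modifies the reduction rather than reusing it verbatim: it adds to each $T_{v,w}$ a dummy tool of cost $0$ under every original scenario --- this is what legitimately creates the zero-cost alternative in every subset, i.e.\ the property $F^*(S_k)=0$ that you assumed for free --- and it adds one extra scenario $S'$ under which each dummy tool costs $g+1$ and each original tool costs $0$. The penalizing scenario is not a cosmetic addition: without it, a solution could ``opt out'' of an edge by selecting its dummy tool, the correspondence between solutions and total labelings would break, and in the case $\mathbf{w}\notin L$ one could no longer force a scenario of cost $g$. With both modifications, any solution containing a dummy tool has regret at least $g+1$, dummy-free solutions satisfy $cost_2(X)=cost_1(X)$ exactly as in your computation, and the replay of Theorem~\ref{tngbmm} then goes through. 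So your overall plan (collapse regret to min-max, then reuse Theorem~\ref{tngbmm}) matches the paper's, but the collapse itself requires altering the instance, which is the step your proof is missing.
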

\begin{proof}
A reduction is almost the same as the one from Theorem~\ref{taprminmax}. We only add to each $T_{u,v}$ a dummy tool with 0 cost under each scenario $S$ and one additional scenario $S'$ under which all the dummy tools have a large cost (say $g+1$) and the original tools have costs equal to 0. Thus, no dummy tool can be a part of an optimal solution and $OPT_2=OPT_1$.
\end{proof}

\section{Approximation algorithms}

In this section, we provide some LP-based 
approximation algorithms for  \textsc{Min-Max RS}.
Let us fix parameter~$L>0$ and let $T(L)\subseteq T$ be the set of all the tools $j\in T$ for which $c_{kj}\leq L$ for all scenarios $k\in [K]$. Clearly, $T(L)=\bigcup_{i\in[p]} T_i(L)$, where
$T_i(L)=\{j\in T_i: \forall_{k\in[K]} c_{kj}\leq L\}$ is a modified  subset of~$T_i$.
Consider the following linear program:
\begin{align}
\mathcal{LP}(L):&\sum_{j\in T(L)} c_{kj} x_j\leq L, &k\in [K],\label{sc}\\
                          &\sum_{j\in T_i(L)}x_j=1,& i\in[p],\label{hc}\\
                          &x_j\geq 0, & j\in T(L), \label{hc1} \\
													&x_j=0, & j\notin T(L).  \label{hc2}
\end{align}
Minimizing $L$ subject to~(\ref{sc})-(\ref{hc2}) we obtain an \emph{LP relaxation} of \textsc{Min-Max RS}.
Let  $L^*$ denote  the smallest value of the parameter~$L$
for which $\mathcal{LP}(L)$ is feasible.
Clearly, $L^*$ is a lower bound on $OPT_1$ and can be determined in polynomial time by using binary search. If $\pmb{x}^*$ is a feasible, fractional solution to  $\mathcal{LP}(L^*)$, then constraints~(\ref{hc}) imply $x_j\geq 1/r_{\max}$ for at least one tool $j \in T_i(L^*)$ for each $i\in [p]$. By choosing such a tool from $T_i(L^*)$, we get a 
solution~$X$ with $cost_1(X)$ at most $r_{\max}\cdot L^*\leq r_{\max}\cdot OPT_1$. This leads to the following observation:
\begin{obs}
 \textsc{Min-Max RS}
 admits an  $r_{\max}$-approximation algorithm.
\end{obs}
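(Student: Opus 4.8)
The plan is to follow the standard LP-relaxation-and-rounding paradigm that the surrounding discussion has already set up. First I would argue that the family $\{\mathcal{LP}(L)\}_{L>0}$ really is a relaxation of \textsc{Min-Max RS}: given an optimal integral selection $X^*$ with $cost_1(X^*)=OPT_1$, nonnegativity of the costs forces $c_{kj}\leq\sum_{j'\in X^*}c_{kj'}=F(X^*,S_k)\leq OPT_1$ for every selected tool $j$ and every scenario $k$, so each $j\in X^*$ lies in $T(OPT_1)$. Hence the $0$--$1$ indicator vector of $X^*$ satisfies (\ref{sc})--(\ref{hc2}) with $L=OPT_1$, which shows $\mathcal{LP}(OPT_1)$ is feasible and therefore $L^*\leq OPT_1$. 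Feasibility of $\mathcal{LP}(L)$ is monotone in $L$ (raising $L$ both loosens the scenario constraints (\ref{sc}) and enlarges each admissible set $T_i(L)$, and the old solution extended by zeros on the new tools remains feasible), so $L^*$ is computable in polynomial time by binary search over $L$ combined with an LP feasibility test, as claimed.

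Next I would round a fractional optimum $\pmb{x}^*$ of $\mathcal{LP}(L^*)$. For each group $i\in[p]$ the equality (\ref{hc}) gives $\sum_{j\in T_i(L^*)}x_j^*=1$, and since $|T_i(L^*)|\leq r_{\max}$, the pigeonhole principle yields a tool $j_i\in T_i(L^*)$ with $x_{j_i}^*\geq 1/r_{\max}$. Selecting one such $j_i$ per group produces a feasible integral solution $X=\{j_1,\dots,j_p\}$. To bound its cost, fix a scenario $S_k$; using $x_{j_i}^*\geq 1/r_{\max}$ (equivalently $1/x_{j_i}^*\leq r_{\max}$) together with the scenario constraint (\ref{sc}) gives
\[
F(X,S_k)=\sum_{i\in[p]}c_{kj_i}\leq r_{\max}\sum_{i\in[p]}c_{kj_i}x_{j_i}^*\leq r_{\max}\sum_{j\in T(L^*)}c_{kj}x_j^*\leq r_{\max}L^*.
\]
Taking the maximum over $k$ and combining with $L^*\leq OPT_1$ yields $cost_1(X)\leq r_{\max}L^*\leq r_{\max}\cdot OPT_1$, which is exactly the claimed guarantee, and every step above runs in polynomial time.

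There is no deep difficulty here; the work lies in getting the bookkeeping right rather than in any hard estimate. The one point that deserves care---and the step I would treat as the main obstacle---is the interplay between the parameter $L$ and the restricted sets $T_i(L)$: the rounding guarantee crucially relies on the selected tool $j_i$ belonging to $T_i(L^*)$, so that its cost under every scenario is automatically at most $L^*$ and the per-scenario inequality above is valid (a tool with some $c_{kj}>L^*$ could blow up $F(X,S_k)$ even if $x_{j_i}^*$ is large). This is also why constraint (\ref{hc}) is stated over $T_i(L^*)$ rather than over $T_i$, and why one must verify that $T_i(L^*)\neq\emptyset$ for every $i$---which holds precisely because $\mathcal{LP}(L^*)$ is feasible. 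Once this restriction is handled correctly, the pigeonhole rounding and the cost bound follow routinely.
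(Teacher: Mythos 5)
Your proposal is correct and takes essentially the same route as the paper: relax to $\mathcal{LP}(L)$, find the smallest feasible $L^*$ by binary search, use constraint~(\ref{hc}) and pigeonhole to pick a tool with $x_j^*\geq 1/r_{\max}$ in each group, and bound the cost by $r_{\max}L^*\leq r_{\max}\cdot OPT_1$ via the scaling inequality against constraint~(\ref{sc}). You merely spell out steps the paper leaves implicit (the proof that $L^*\leq OPT_1$, the monotonicity justifying binary search, and the role of restricting to $T_i(L^*)$), all of which are correct.
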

 It is worth pointing out that the above algorithm is the best possible when $r_{\max}=2$,  according to the
 negative results provided in~\cite{DW13}. The following observation describes the integrality gap of the LP relaxation:
\begin{obs}
	The LP relaxation has an integrality gap of
at least $\Omega(\log{K}/\log\log{K})$.
\end{obs}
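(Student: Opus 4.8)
The plan is to exhibit an explicit family of instances on which $OPT_1/L^*=\Omega(\log K/\log\log K)$, arranging the tools as a grid so that pigeonhole forces every integral selection to overload some scenario while the uniform fractional point keeps every scenario cheap. Fix an integer $d\ge 2$. I would take $r_{\max}=d$, so that each group $T_i$ consists of exactly $d$ tools, which I think of as the $d$ possible ``columns'' $c\in[d]$, and I would take $p=d^2$ groups, thought of as ``rows.'' A feasible integral solution is then exactly a map $f:[p]\to[d]$ assigning a column to each row, i.e. selecting tool $(i,f(i))$ from $T_i$.

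The scenarios would be built to punish repetition within a column. For every column $c\in[d]$ and every $d$-element set $R\subseteq[p]$ of rows I create a scenario assigning cost $1$ to the $d$ tools $\{(i,c):i\in R\}$ and cost $0$ to every other tool, and I add one all-zero scenario so that $\Gamma\neq\emptyset$; this yields $K=d\binom{d^2}{d}+1$ scenarios. To pin down $L^*$: since every tool has maximal cost equal to $1$, we have $T(L)=\emptyset$ for $L<1$, so the partition constraints $\sum_{j\in T_i(L)}x_j=1$ are infeasible there and $L^*\ge 1$; conversely the uniform point $x_j=1/d$ satisfies those constraints with equality and makes the cost of each scenario equal to $d\cdot(1/d)=1$, so $\mathcal{LP}(1)$ is feasible and hence $L^*=1$. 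For the integral value, given any $f$, pigeonhole over $d$ columns and $d^2$ rows forces some column $c$ to receive at least $\lceil d^2/d\rceil=d$ rows; selecting $d$ of these rows as $R$ produces a scenario in $\Gamma$ under which the chosen tools cost exactly $d$. Thus $OPT_1\ge d$, and the integrality gap on this instance is at least $OPT_1/L^*\ge d$.

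It then remains to calibrate $d$ against $K$. Using $d^{d}\le\binom{d^2}{d}\le (ed)^{d}$ I get $\log K=\Theta(d\log d)$, whence $\log\log K=\Theta(\log d)$ and therefore $d=\Theta(\log K/\log\log K)$. Combining this with $OPT_1/L^*\ge d$ shows that the integrality gap is $\Omega(\log K/\log\log K)$, as claimed. (As a sanity check, here $r_{\max}=d$ and the gap is $\Theta(d)$, consistent with the $r_{\max}$-approximation of Observation~1.)

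The only genuinely delicate point I expect is the \emph{joint} choice of the two grid dimensions: the number of columns must be large enough (namely $d$) that the uniform fractional solution spreads each scenario's single unit of cost thinly enough to keep $L^*=1$, while the number of rows must be large enough ($d^2$) that pigeonhole guarantees a column load of $d$ no matter how the transversal is chosen; it is exactly the resulting scenario count $K=d\binom{d^2}{d}$ that converts the gap $d$ into the rate $\log K/\log\log K$ rather than something weaker. Everything else reduces to the pigeonhole estimate and the routine binomial bounds above.
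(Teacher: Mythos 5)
Your proof is correct, and it follows the same overall strategy as the paper's: exhibit an explicit instance on which the uniform fractional point is feasible at $L=1$ while every integral transversal is forced to pay the full gap under some scenario, then calibrate via $\log K=\Theta(g\log g)$, hence $g=\Theta(\log K/\log\log K)$. The constructions differ in their details, however. The paper takes $p$ groups of size $p$ and lets the scenario set be \emph{all} $p^p$ transversals (cost $1$ on one tool per group); then any integral solution is itself a transversal, so the scenario coinciding with it certifies cost exactly $p$ --- no pigeonhole is needed, and $K=p^p$ on the nose. You instead take $d^2$ groups of size $d$ and build scenarios from a single column together with a $d$-subset of rows, which makes the scenario set larger ($K=d\binom{d^2}{d}+1$) and requires the pigeonhole step to locate an overloaded column. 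What the paper's version buys is brevity: the integral lower bound is immediate and the scenario count is cleaner. What yours buys is a mildly stronger structural statement --- the gap already occurs for scenario sets in which each scenario charges only repeated uses of a single label --- together with a more careful treatment of $L^*$: you prove both $L^*\ge 1$ and $L^*\le 1$, whereas the paper only needs (and effectively only uses) $L^*\le 1$, since a smaller $L^*$ would only enlarge the ratio $OPT_1/L^*$. Your binomial bounds $d^d\le\binom{d^2}{d}\le(ed)^d$ and the resulting asymptotics are correct, so both routes arrive at the same $\Omega(\log K/\log\log K)$ bound.
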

\begin{proof}
 Let~$p>0$ be an arbitrary integer. 
Fix $n=p^2$. Let  $|T_i|=p$ for every $i\in [p]$. 
We form a scenario set~$\Gamma$ as follows. 
For each $p$-tuple~$(e_1,\dots,e_p)\in [p]^p$, where the component~$e_i\in [p]$ corresponds to the $e_i$-th element in subset~$T_i$,
we form scenario under which the tools indicated by  $(e_1,\dots,e_p)$
have costs equal to~$1$ and all the remaining tools have costs equal to~0.
Thus, $|\Gamma|=K=p^p$. Consider a fractional solution $x_j=1/p$, for all $j\in T$. This solution is feasible  to $\mathcal{LP}(L^*)$, where $L^*=1$.
It is easy to notice that each integer solution for the instance constructed has the maximum cost over all scenarios equal to $p$.
 Therefore, we have the integrality gap of~$p$. Since
$\log{K}=p \log{p}$, $\log{K}/\log\log{K} = p \log{p}/(\log{p}+ \log{\log{p}})=\Theta(p)$ and
 the integrality gap of the LP relaxation is  at least $\Omega( \log{K}/\log{\log{K}})$.
\end{proof}

We now convert  a feasible, fractional solution~$\pmb{x}^*$
 to  $\mathcal{LP}(L^*)$
 into an integer solution $\pmb{z}\in \{0,1\}^n$, which will represent a feasible tool selection. 
 To do this
  a randomized rounding technique proposed in~\cite{RT87,S99}
  can be applied.
   We first remove all the tools whose cost under some scenario is greater than $L^*$, i.e. $j\not \in  T(L^*)$.
We lose nothing by assuming, from now on, that the remaining tools~$j$, $j \in  T(L^*)$,  are
numbered from $1$ to $n$,  $|T(L^*)|=n$, and thus
$|T_i(L^*)|=r_i$ and $n=\sum_{j\in[p]}r_i$.
Furthermore,
we make the assumption that 
 $L^*=1$ and all the tool costs  are such that~$c_{kj}\in[0,1]$, $k\in[K]$, $j\in  [n]$. One can easily meet this
assumption by dividing  all the tool costs~$c_{kj}$ by $L^*$  for all $k\in[K]$ and  $j\in [n]$.
It turns out that
the above instance  of  the \textsc{Min-Max RS} problem with the scaled costs 
is an instance  of the \emph{minimax integer program} discussed in~\cite{S99}. Therefore,
we now construct $\pmb{z}$ as follows (see~\cite{RT87,S99}): 
independently, for each $i\in [p]$, pick one item $\overline{j}$ from $T_i$ with probability $x_{\overline{j}}^*$, set $z_{\overline{j}}=1$ and $z_j=0$ for the remaining items $j\in T_i$.
 It is clear that~$\pmb{z}$ represents a feasible tool selection. Let $\mathbf{C}$ denote the matrix corresponding to the scenario constraints~(\ref{sc}), i.e. $\mathbf{C}=(c_{kj}) \in [0,1]^{K\times n}$ and $(\mathbf{C} \pmb{z})_k=\sum_{j\in [n]} c_{kj} z_j$ stands for
the l.h.s value  of the scenario constraint~$k$
for solution~$\pmb{z}$. It holds $\mu_k=\mathrm{E}[(\mathbf{C} \pmb{z})_k]=(\mathbf{C} \pmb{x}^*)_k\leq L^*=1$ for all $k\in [K]$, due to
the linearity of expectation. We can make  the assumption that $\mu_k=1$  for all~$k\in [K]$, which can be satisfied
by adding slack variables to the constraints~(\ref{sc}). 
Now,
an analysis similar to that 
in~\cite[the proof of Theorem~6.1]{S99} shows that 
$(\mathbf{C} \pmb{z})_k$  is at most $O(\log K/\log\log K)$ for each $k\in [K]$ with high probability. 
We thus get a randomized  $O(\log K/\log\log K)$ approximation algorithm for the
\textsc{Min-Max RS} problem.

In order to derandomize the  above algorithm, one  can use
the \emph{method  of  pessimistic estimators}~\cite{R88}. However, this method  assumes that one can perform the computations with real numbers (in particular exponentials) with high precision in constant time.
Thus, it fails in the RAM model, when $\mathbf{C} \in [0,1]^{K\times n}$
(see~\cite{R88}, for comments). However, it works in  the RAM model when
 $\mathbf{C} \in \{0,1\}^{K\times n}$.
 Note that
 this particular case, i.e.
 \textsc{Min-max RS} with a binary matrix, is equivalent to the \emph{vector selection} problem 
studied in~\cite{R88}. 
We can thus formulate 
the analogue of results given in~\cite{R88}.
The following formulation, which provides  
 bounds on rounding errors, 
will be very  useful in a derandomization 
 of the randomized algorithm in the RAM model.
\begin{thm}
Let $\mathbf{C} \in \{0,1\}^{K\times n}$ and let $\pmb{x}\in ([0,1]\cap \Qset)^n$ be any 
solution that satisfies constraints~(\ref{hc}). Then
in the RAM model 
  a rounding $\pmb{z}\in\{0,1\}^n$ satisfying~(\ref{hc}) 
  can be computed,  by the method  of  pessimistic estimators
  in $O(Kn)$ time,
  such that the
 rounding error $|(\mathbf{C} \pmb{z})_k-(\mathbf{C}\pmb{x})_k|$ is $O(\max\{1,(\mathbf{C} \pmb{x})_k\} \log K/\log\log K)$
for all $k\in[K]$.
\label{thmder1}
\end{thm}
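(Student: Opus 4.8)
The plan is to follow the method of pessimistic estimators in the style of~\cite{R88}, specialized to the binary case so that all arithmetic stays rational and the procedure fits the RAM model. First I would analyze the randomized rounding described just before the theorem: since one tool is chosen independently in each group $T_i$ and $\mathbf{C}\in\{0,1\}^{K\times n}$, each value $(\mathbf{C}\pmb{z})_k=\sum_{i\in[p]}Y_{ki}$ is a sum of $p$ independent Bernoulli variables $Y_{ki}\in\{0,1\}$ with $\mathrm{E}[Y_{ki}]=\sum_{j\in T_i}c_{kj}x_j$ and total mean $\mu_k=(\mathbf{C}\pmb{x})_k$. Setting $\Delta_k=\Theta(\max\{1,\mu_k\}\log K/\log\log K)$ and defining the bad events $B_k=\{|(\mathbf{C}\pmb{z})_k-\mu_k|>\Delta_k\}$, I would use the standard Chernoff--Bernstein exponential moment bounds to show that, for a suitable constant, $\Pr[B_k]<1/K$ for every $k$, so that $\Pr[\bigcup_{k\in[K]}B_k]<1$ by a union bound. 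This already proves the existence of a good rounding and, more usefully, produces the exponential quantities that serve as pessimistic estimators.

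Next I would attach to each scenario~$k$ two one-sided estimators obtained from Markov's inequality applied to $e^{t_k(\mathbf{C}\pmb{z})_k}$ and $e^{-s_k(\mathbf{C}\pmb{z})_k}$ with $t_k,s_k>0$; each factors over the independent groups, the upper-tail one being $e^{-t_k(\mu_k+\Delta_k)}\prod_{i\in[p]}\bigl(1+(e^{t_k}-1)\mathrm{E}[Y_{ki}]\bigr)$ and the lower-tail one analogous with $t_k$ replaced by $-s_k$ and $+\Delta_k$ by $-\Delta_k$. Let $\Phi$ be the sum of these $2K$ estimators over the current partial assignment. I would verify the three defining properties: (i)~$\Phi$ is initially below~$1$, which is exactly the union-bound estimate above; (ii)~$\Phi$ dominates the conditional failure probability at every partial assignment; and (iii)~since conditioning on the random choice in the next group leaves $\mathrm{E}[\Phi]$ unchanged, some tool in that group can be selected without increasing $\Phi$. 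Processing $T_1,\dots,T_p$ in turn and greedily choosing in each group the tool that minimizes $\Phi$ yields a fully determined $\pmb{z}$ satisfying~(\ref{hc}) with $\Phi<1$; for integral $\pmb{z}$ each estimator is at least~$1$ exactly when its bad event occurs, so $\Phi<1$ forces $|(\mathbf{C}\pmb{z})_k-\mu_k|\leq\Delta_k$ for all $k$, which is the claimed bound.

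For the running time and the RAM-model claim I would maintain, for every scenario~$k$, the current value of its two estimator products, updating them as groups are decided so that testing one candidate tool costs $O(1)$ per scenario; each group $T_i$ is then handled in $O(Kr_i)$ time and the whole rounding in $O(Kn)$. The step I expect to require the most care---and the reason the hypothesis $\mathbf{C}\in\{0,1\}$ cannot be dropped---is keeping this computation inside the RAM model. Because the matrix is binary, each $(\mathbf{C}\pmb{z})_k$ is an integer and every factor $\mathrm{E}[e^{t_k Y_{ki}}]$ involves only the single constant $e^{t_k}$, so one may fix $e^{t_k}$ and $e^{-s_k}$ to rationals of polynomially bounded bit-length and evaluate the products in bounded precision, rounding each estimator upward so that it stays a valid bound while the initial total remains below~$1$ with room to spare. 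Since the greedy step never increases the (rounded) potential, the final $\pmb{z}$ still has $\Phi<1$ and all error bounds hold. For $\mathbf{C}\in[0,1]$ this breaks down, as one would then have to evaluate $e^{t_k c_{kj}}$ for arbitrary real $c_{kj}$ to high precision, which the RAM model does not provide in constant time.
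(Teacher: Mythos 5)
Your proposal is correct and follows essentially the same route as the paper, whose proof of this theorem is simply a one-line deferral to Raghavan~\cite[Theorems 5 and 6]{R88}: the argument you spell out---Chernoff-type exponential moment bounds yielding the $O(\max\{1,\mu_k\}\log K/\log\log K)$ deviation, the $2K$ one-sided estimators as pessimistic estimators with the three standard properties, greedy group-by-group derandomization in $O(Kn)$ time, and the observation that binarity of $\mathbf{C}$ is what keeps all arithmetic rational and RAM-feasible---is precisely the content of those cited theorems. In effect you have reconstructed the proof the paper outsources, so no discrepancy to report.
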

\begin{proof}
 The theorem can be proved in the same way as in~\cite[Theorems 5 and 6]{R88}.
\end{proof}
We now extend  Theorem~\ref{thmder1} to the case of $\mathbf{C} \in ([0,1] \cap \Qset)^{K\times n}$.
We make use of the derandomization  presented in~\cite{D06,D13}, which 
consists in approximating~$\mathbf{C}$ by
binary expansions 
and applying 
 the method  of  pessimistic estimators to
 binary matrices (in our case applying Theorem~\ref{thmder1}).
 \begin{thm}
Let $\mathbf{C} \in ([0,1] \cap \Qset)^{K\times n}$ and let $\pmb{x}\in ([0,1]\cap \Qset)^n$ be any 
solution that satisfies constraints~(\ref{hc}). Then
 a solution $\pmb{z}\in\{0,1\}^n$
 can be deterministically computed   in $O(Kn\log n)$ time such that  $\pmb{z}$ satisfies~(\ref{hc}) 
 and $(\mathbf{C} \pmb{z})_k$ is of value  $O(\max\{1,(\mathbf{C} \pmb{x})_k\} \log K/\log\log K)$
for all $k\in[K]$.
\label{deramth}
 \end{thm}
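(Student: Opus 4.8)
The plan is to reduce the rational-matrix case to the binary case already settled by Theorem~\ref{thmder1}, following the bit-wise derandomization of~\cite{D06,D13}. First I would fix a precision $\ell=\lceil\log_2 n\rceil$ and replace each entry $c_{kj}$ by the truncation $\tilde c_{kj}=\sum_{t=1}^{\ell}2^{-t}b^{(t)}_{kj}$ of its binary expansion, where $b^{(t)}_{kj}\in\{0,1\}$ is the $t$-th bit, and set $\tilde{\mathbf{C}}=(\tilde c_{kj})$. Since $0\le c_{kj}-\tilde c_{kj}<2^{-\ell}$ and every nonnegative vector $\pmb y$ satisfying~(\ref{hc}) obeys $\sum_{j}y_j=p\le n$, the truncation is cheap: $0\le(\mathbf{C}\pmb y)_k-(\tilde{\mathbf{C}}\pmb y)_k< n\,2^{-\ell}\le 1$ for every $k$. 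I would apply this both to the input $\pmb x$ and to the rounding $\pmb z$ produced below, so that passing from $\tilde{\mathbf{C}}$ back to $\mathbf{C}$ costs only an additive $O(1)$ per constraint. The reason this works is that $\ell=O(\log n)$ bits already suffice, which is exactly what pins the running time at $O(Kn\log n)$.

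Next I would stack the bit matrices into a single binary matrix $\mathbf{B}\in\{0,1\}^{(K\ell)\times n}$ whose rows are indexed by pairs $(k,t)$, $k\in[K]$, $t\in[\ell]$, with $B_{(k,t),j}=b^{(t)}_{kj}$, so that $(\tilde{\mathbf{C}}\pmb y)_k=\sum_{t=1}^{\ell}2^{-t}(\mathbf{B}\pmb y)_{(k,t)}$ for every $\pmb y$. Applying Theorem~\ref{thmder1} to $\mathbf{B}$ and $\pmb x$ then produces, in $O((K\ell)n)=O(Kn\log n)$ time, a rounding $\pmb z\in\{0,1\}^n$ that still satisfies~(\ref{hc}) and obeys, for every row $(k,t)$, the estimate $|(\mathbf{B}\pmb z)_{(k,t)}-(\mathbf{B}\pmb x)_{(k,t)}|=O\!\left(\max\{1,(\mathbf{B}\pmb x)_{(k,t)}\}\,\log(K\ell)/\log\log(K\ell)\right)$.

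The decisive step is to recombine these per-bit guarantees into a single bound per original constraint $k$ without paying a factor of $\ell$. Weighting the row errors by $2^{-t}$ and using $\max\{1,a\}\le 1+a$, the crucial cancellation is
\[
\sum_{t=1}^{\ell}2^{-t}\max\{1,(\mathbf{B}\pmb x)_{(k,t)}\}\le \sum_{t=1}^{\ell}2^{-t}+\sum_{t=1}^{\ell}2^{-t}(\mathbf{B}\pmb x)_{(k,t)}\le 1+(\tilde{\mathbf{C}}\pmb x)_k\le 2\max\{1,(\mathbf{C}\pmb x)_k\},
\]
because the weighted sum of the $\ell$ bit-rows reassembles exactly the single value $(\tilde{\mathbf{C}}\pmb x)_k\le(\mathbf{C}\pmb x)_k$ rather than accumulating $\ell$ separate terms. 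Hence $|(\tilde{\mathbf{C}}\pmb z)_k-(\tilde{\mathbf{C}}\pmb x)_k|\le\sum_{t}2^{-t}|(\mathbf{B}\pmb z)_{(k,t)}-(\mathbf{B}\pmb x)_{(k,t)}|=O(\max\{1,(\mathbf{C}\pmb x)_k\}\log(K\ell)/\log\log(K\ell))$, and adding the two additive truncation errors from the first step gives $(\mathbf{C}\pmb z)_k\le(\mathbf{C}\pmb x)_k+O(\max\{1,(\mathbf{C}\pmb x)_k\}\log(K\ell)/\log\log(K\ell))+O(1)$. Since $\ell=O(\log n)$, this is $O(\max\{1,(\mathbf{C}\pmb x)_k\}\log K/\log\log K)$, as claimed.

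The main obstacle I anticipate is precisely this recombination: treating the stacked system as $K\ell$ independent constraints would seem to inflate the relative error by $\ell$, and the whole argument turns on the fact that the geometric weights $2^{-t}$ collapse the $\ell$ bit-rows back into the original constraint value, so the $\max\{1,\cdot\}$ terms sum to $1+(\mathbf{C}\pmb x)_k$ rather than $\ell\cdot(\mathbf{C}\pmb x)_k$. A secondary point to verify is that the truncation error must be measured against the cardinality $\sum_j y_j=p$ enforced by~(\ref{hc}), which is what makes $O(\log n)$ bits enough and fixes the $O(Kn\log n)$ running time; and to replace $\log(K\ell)$ by $\log K$ one invokes, as in~\cite{D06,D13}, one pessimistic estimator per original constraint, built as a product over its $\ell$ bit-rows, so that the union runs over $K$ constraints only.
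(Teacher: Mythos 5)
Your decomposition is essentially the paper's own proof: truncate each entry to $\ell=\lceil\log n\rceil$ bits, stack the bit matrices into a $(K\ell)\times n$ binary matrix, apply Theorem~\ref{thmder1} to it together with $\pmb{x}$, and recombine the per-bit-row errors with the geometric weights $2^{-t}$, using $\sum_{t}2^{-t}\max\{1,(\mathbf{B}\pmb{x})_{(k,t)}\}\leq 1+(\tilde{\mathbf{C}}\pmb{x})_k$; the two additive truncation errors and the $O(Kn\log n)$ running time are treated the same way (your justification via $\sum_j z_j=p\leq n$ is equivalent to the paper's norm bound). Up to the inequality $(\mathbf{C}\pmb{z})_k\leq O\bigl(\max\{1,(\mathbf{C}\pmb{x})_k\}\log(K\ell)/\log\log(K\ell)\bigr)$, your argument is correct and matches the paper step for step.

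The gap is in the very last step, the passage from $\log(K\ell)$ to $\log K$. Your assertion that ``since $\ell=O(\log n)$, this is $O(\max\{1,(\mathbf{C}\pmb{x})_k\}\log K/\log\log K)$'' is false in general: $K$ and $n$ are independent parameters, and when $K$ is small relative to $\log n$ (say $K=O(1)$ and $n\rightarrow\infty$) one has $\log(K\ell)/\log\log(K\ell)=\Theta(\log\log n/\log\log\log n)$, which is not $O(\log K/\log\log K)$. The paper confronts exactly this point: it notes the theorem follows once $K\geq\log^{1/\gamma}n$ (since then $\ell\leq K^{\gamma}$, so $\log(K\ell)=O(\log K)$), enforces that assumption by padding the instance with zero-cost dummy scenarios, and explicitly defers the genuinely small-$K$ regime to algebraic techniques cited from~\cite{D13}. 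Your one-sentence substitute --- ``one pessimistic estimator per original constraint, built as a product over its $\ell$ bit-rows, so that the union runs over $K$ constraints only'' --- does not hold up as stated. The $\ell$ bit-rows of a fixed constraint $k$ are bits of the same entries $c_{kj}$, so after rounding they are strongly correlated random quantities: neither a union bound (which is a sum, not a product) nor a moment generating function factorizes across them. If instead one defines a single estimator directly for the weighted sum $(\tilde{\mathbf{C}}\pmb{z})_k=\sum_t 2^{-t}(\mathbf{B}\pmb{z})_{(k,t)}$, its evaluation involves non-integer powers such as $(1+\delta)^{2^{-t}}$ --- precisely the irrational arithmetic in the RAM model that forces Theorem~\ref{thmder1} to be restricted to binary matrices in the first place. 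So this final step needs either the paper's assumption-plus-padding argument (accepting the external reference for small $K$) or a genuinely worked-out estimator construction; your sketch provides neither.
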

 \begin{proof}
The proof goes in the same manner as in~\cite[Theorem 2]{D13}. We  present it here 
 to show the idea and for the completeness.
 Set $t=\lceil \log n \rceil$ and express each entry of~$\mathbf{C}$ in $t$-bit binary form. This can be done in
  $O(Kn\log n)$ time.
 Let us denote by $\mathbf{\tilde{C}}$ the $t$-bits representation of~$\mathbf{C}$, namely
 $\mathbf{\tilde{C}}=\sum_{l\in[t]} 2^{-l}\mathbf{C}^{(l)}$, where $\mathbf{C}^{(l)}\in  \{0,1\}^{K\times n}$.
 Clearly, $|c_{kj}-\tilde{c}_{kj}|< 2^{-t}\leq 1/n$, $k\in [K], j\in [n]$.
 Thus, $||\mathbf{\tilde{C}} \pmb{x} - \mathbf{C} \pmb{x}||_{\infty}=||(\mathbf{\tilde{C}} -\mathbf{C}) \pmb{x}||_{\infty}\leq ||\mathbf{\tilde{C}} - \mathbf{C}||_{\infty} ||\pmb{x}||_{\infty}\leq 1$
 for every $\pmb{x}\in [0,1]^n$.
 We now apply Theorem~\ref{thmder1} to the solution~$\pmb{x}$ and 
  the $tK\times n$ binary matrix consisting of the rows of the binary matrices $\mathbf{C}^{(l)}$, $l\in[t]$
  and get in $O(tKn)$ time (and thus in $O(Kn\log n)$ time) in the RAM model a solution
  $\pmb{z}\in\{0,1\}^n$ satisfying~(\ref{hc}) 
   such that for all $k\in[K], l\in [t]$
  the rounding errors are as follows:
  \begin{equation}
   |(\mathbf{C}^{(l)} \pmb{z})_k-(\mathbf{C}^{(l)} \pmb{x})_k|\leq O(\max\{1,(\mathbf{C}^{(l)} \pmb{x})_k\} \log (tK)/\log\log (tK)).
   \label{ere}
  \end{equation}
  Therefore,
  \begin{align}
  |(\mathbf{\tilde{C}} \pmb{z})_k-(\mathbf{\tilde{C}} \pmb{x})_k|&=|\sum_{l\in[t]} 2^{-l}((\mathbf{C}^{(l)} \pmb{z})_k-
  (\mathbf{C}^{(l)} \pmb{x})_k)|\leq
  \sum_{l\in[t]} 2^{-l}|(\mathbf{C}^{(l)} \pmb{z})_k-(\mathbf{C}^{(l)} \pmb{x})_k|\nonumber\\
  &\overset{\text{(\ref{ere})}}{\leq}
  O(\log (tK)/\log\log (tK))\sum_{l\in[t]} 2^{-l}\max\{1,(\mathbf{C}^{(l)} \pmb{x})_k\} \nonumber\\
   &\leq O(\log (tK)/\log\log (tK))\sum_{l\in[t]} 2^{-l}(1+(\mathbf{C}^{(l)} \pmb{x})_k)\nonumber\\
   &\leq O(\log (tK)/\log\log (tK))(1+(\mathbf{\tilde{C}} \pmb{x})_k). \label{ere1}
  \end{align}
  Let us analyze a performance of the solution~$\pmb{z}$:
  \begin{align*}
  (\mathbf{C} \pmb{z})_k&\leq 1+(\mathbf{\tilde{C}} \pmb{z})_k\leq 1+(\mathbf{\tilde{C}} \pmb{x})_k+|(\mathbf{\tilde{C}} \pmb{z})_k-(\mathbf{\tilde{C}} \pmb{x})_k|
  \overset{\text{(\ref{ere1})}}{=} O(\max\{1,(\mathbf{\tilde{C}}\pmb{x})_k\}\log (tK)/\log\log (tK))\\
  & =O(\max\{1,(\mathbf{C} \pmb{x})_k\}\log (tK)/\log\log (tK)).
 \end{align*}
The theorem follows when $K\geq \log^{1/\gamma} n$ for some constant $\gamma\geq 1$, because it holds 
$t\leq K^\gamma$. We can satisfy this assumption by adding a number of zero cost (dummy) scenarios, if necessary.  Alternatively, we can handle the case with a small number of scenarios
(e.g.,   $K< \log^{1/\gamma} n$, for  $\gamma\geq 2$)
 by applying some algebraic techniques (see, e.g.,~\cite{D13}).
 \end{proof}
 From Theorem~\ref{deramth} and the fact that
$\pmb{x^*}$ is a feasible solution to $\mathcal{LP}(L^*)$,  we immediately  get the following corollary:
\begin{cor}
   \textsc{Min-Max RS} has a polynomial, deterministic 
   $O(\log K/\log\log K)$
approximation algorithm, where $K$ and $r_{\max}$ are parts of the input.  
\end{cor}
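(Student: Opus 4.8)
The plan is to assemble the corollary directly from the machinery already in place: the LP relaxation $\mathcal{LP}(L)$ supplies a polynomial-time computable lower bound $L^* \leq OPT_1$ together with a feasible fractional solution, while Theorem~\ref{deramth} converts that fractional solution into an integral one, inflating each scenario cost by only a factor $O(\log K/\log\log K)$. Chaining these two facts yields the claimed deterministic approximation.

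I would proceed as follows. First I would compute $L^*$, the smallest value of $L$ for which $\mathcal{LP}(L)$ is feasible, by binary search on $L$, each feasibility test being a single linear program; this runs in polynomial time and guarantees $L^* \leq OPT_1$. Next I would normalize by dividing every cost $c_{kj}$ by $L^*$, so that after discarding the tools $j\notin T(L^*)$ and renumbering we land exactly in the setting assumed before Theorem~\ref{deramth}: $L^*=1$, $\mathbf{C}\in([0,1]\cap\Qset)^{K\times n}$, and a feasible fractional $\pmb{x}^*$ to $\mathcal{LP}(1)$ that satisfies the partition constraints~(\ref{hc}) and obeys $(\mathbf{C}\pmb{x}^*)_k\leq 1$ for every $k$ by~(\ref{sc}). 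I would then feed $\pmb{x}^*$ into Theorem~\ref{deramth}, obtaining in $O(Kn\log n)$ time an integral $\pmb{z}\in\{0,1\}^n$ that again satisfies~(\ref{hc})---hence represents a feasible tool selection---with $(\mathbf{C}\pmb{z})_k = O(\max\{1,(\mathbf{C}\pmb{x}^*)_k\}\,\log K/\log\log K)$ for all $k$.

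The only point requiring a little care is to check that this bound collapses to what we want and that the guarantee survives the rescaling. Because every $(\mathbf{C}\pmb{x}^*)_k\leq 1$, the quantity $\max\{1,(\mathbf{C}\pmb{x}^*)_k\}$ equals $1$, so $(\mathbf{C}\pmb{z})_k = O(\log K/\log\log K)$ uniformly in $k$, whence $cost_1(\pmb{z}) = \max_{k\in[K]}(\mathbf{C}\pmb{z})_k = O(\log K/\log\log K)$ in the normalized instance. Undoing the scaling multiplies this back by $L^*$, giving $cost_1 = O(L^*\log K/\log\log K)\leq O(OPT_1\log K/\log\log K)$, which is precisely the desired performance ratio. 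Since the binary search, the LP solves, and the single invocation of Theorem~\ref{deramth} are all polynomial in the input size, the overall procedure is a polynomial, deterministic $O(\log K/\log\log K)$ approximation algorithm. I do not expect a genuine obstacle: all the difficulty has already been absorbed into Theorem~\ref{deramth}, and the residual work is the routine bookkeeping of the lower bound $L^*\leq OPT_1$ together with the normalization and rescaling of the costs.
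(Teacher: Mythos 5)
Your proposal is correct and follows essentially the same route as the paper: the paper's own proof of this corollary is a one-line appeal to Theorem~\ref{deramth} together with the feasibility of $\pmb{x}^*$ for $\mathcal{LP}(L^*)$, with the binary search for $L^*$, the removal of tools outside $T(L^*)$, the normalization $c_{kj}/L^*$, and the rescaling argument all set up in the surrounding text exactly as you describe. You have merely made explicit the bookkeeping (in particular that $(\mathbf{C}\pmb{x}^*)_k\leq 1$ collapses the $\max\{1,\cdot\}$ term and that the guarantee survives multiplication by $L^*\leq OPT_1$) that the paper leaves implicit.
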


\section{Concluding remarks}

 There is still an open question concerning
the \textsc{Min-Max Regret RS} problem, when both~$K$ and $r_{\max}$ are parts of the input.
For this problem, there exists  a  $K$-approximation algorithm, known in the literature,
and $O(\log^{1-\epsilon} K)$ lower bound  on the approximability of  the problem,
given in this paper.
So, constructing a better approximation algorithm for
\textsc{Min-Max Regret RS} is an interesting subject of further
research.

%***********Acknowledgements***********************************
\subsubsection*{Acknowledgements}
This work was 
partially supported by
 the National Center for Science (Narodowe Centrum Nauki), grant  2013/09/B/ST6/01525.

%\bibliographystyle{abbrv} 
%\bibliography{robust} 

\begin{thebibliography}{10}

\bibitem{ABV09}
H.~Aissi, C.~Bazgan, and D.~Vanderpooten.
\newblock Min--max and min--max regret versions of combinatorial optimization
  problems: {A} survey.
\newblock {\em European Journal of Operational Research}, 197:427--438, 2009.

\bibitem{AC95}
S.~Arora and C.~Lund.
\newblock Hardness of approximations.
\newblock In D.~Hochbaum, editor, {\em Approximation Algorithms for NP-Hard
  Problems}, chapter~10, pages 1--54. PWS, 1995.

\bibitem{DW13}
V.~G. Deineko and G.~J. Woeginger.
\newblock Complexity and in-approximability of a selection problem in robust
  optimization.
\newblock {\em 4OR - A Quarterly Journal of Operations Research}, 11:249--252,
  2013.

\bibitem{D06}
B.~Doerr.
\newblock {G}enerating {R}andomized {R}oundings with {C}ardinality
  {C}onstraints and {D}erandomizations.
\newblock In {\em STACS}, pages 571--583, 2006.

\bibitem{D13}
B.~Doerr.
\newblock Improved approximation algorithms for the {M}in-{M}ax selecting
  {I}tems problem.
\newblock {\em Information Processing Letters}, 113:747--749, 2013.

\bibitem{DK12}
A.~Dolgui and S.~Kovalev.
\newblock Min-max and min-max (relative) regret approaches to representatives
  selection problem.
\newblock {\em 4OR - A Quarterly Journal of Operations Research}, 10:181--192,
  2012.

\bibitem{KZ09}
A.~Kasperski and P.~Zieli{\'n}ski.
\newblock On the approximability of minmax (regret) network optimization
  problems.
\newblock {\em Information Processing Letters}, 109:262--266, 2009.

\bibitem{KY97}
P.~Kouvelis and G.~Yu.
\newblock {\em Robust Discrete Optimization and its applications}.
\newblock Kluwer Academic Publishers, 1997.

\bibitem{MNO13}
M.~Mastrolilli, N.~Mutsanas, and O.~Svensson.
\newblock Single machine scheduling with scenarios.
\newblock {\em Theoretical Computer Science}, 477:57--66, 2013.

\bibitem{R88}
P.~Raghavan.
\newblock {P}robabilistic {C}onstruction of {D}eterministic {A}lgorithms:
  {A}pproximating {P}acking {I}nteger {P}rograms.
\newblock {\em Journal of Computer and System Sciences}, 37:130--143, 1988.

\bibitem{RT87}
P.~Raghavan and C.~D. Thomson.
\newblock Randomized rounding: a technique for provably good algorithms and
  algorithmic proofs.
\newblock {\em Combinatorica}, 7:365--374, 1987.

\bibitem{S99}
A.~Srinivasan.
\newblock Approximation algorithms via randomized rounding: a survey.
\newblock In M.~Karo{\'n}ski and H.~J. Pr{\"o}mel, editors, {\em Series in
  Advanced Topics in Mathematics}, pages 9--71. Polish Scientific Publishers
  PWN, 1999.

\end{thebibliography}

\end{document}